\DeclareMathOperator{\aand}{\,\;\wedge\;\,}
\newcommand\inde[1]{ \hskip 0.3mm plus 0.2mm minus 0.1mm [ \hskip 0.2mm plus 0.2mm minus 0.1mm {#1} \hskip 0.2mm plus 0.2mm minus 0.1mm ] \hskip 0.2mm plus 0.1mm minus 0.1mm}
\title{Closure Properties of General Grammars\,---\allowbreak\,Formally Verified}
\author{Martin Dvorak}{Institute of Science and Technology Austria, Austria}{martin.dvorak@ista.ac.at}{https://orcid.org/0000-0001-5293-214X}{}
\author{Jasmin Blanchette}{Ludwig-Maximilians-Universität München, Germany}{jasmin.blanchette@ifi.lmu.de}{https://orcid.org/0000-0002-8367-0936}{%
This research has received funding from the Netherlands Organization for
Scientific Research (NWO) under the Vidi program (project No.\ 016.Vidi.189.037,
Lean Forward).}
\authorrunning{M.~Dvorak and J.~Blanchette}
\keywords{Lean, type-0 grammars, recursively enumerable languages, Kleene star}
\let\oldtexttt\texttt
\renewcommand{\texttt}[1]{\begingroup\setlength{\fboxsep}{0pt}\setlength{\fboxrule}{0pt}\colorbox{lipicsLightGray}{\rule[-.25\baselineskip]{0pt}{0.9\baselineskip}\oldtexttt{#1}}\endgroup}
\begin{document}

\maketitle

\begin{abstract}
We formalized general (i.e., type-0) grammars using the Lean 3 proof
assistant. We defined basic notions of rewrite rules and of words derived by a
grammar, and used grammars to show closure of the class of type-0
languages under four operations:\ union, reversal, concatenation, and
the Kleene star.
The literature mostly focuses on Turing machine arguments, which
are possibly more difficult to formalize. For the Kleene star, we could
not follow the literature and came up with our own grammar-based construction.
\end{abstract}

\section{Introduction}
\label{sec:introduction}

The notion of formal languages lies at the heart of computer science.
There are several formalisms that recognize formal languages,
including Turing machines and formal grammars.
In particular, both Turing machines and general grammars (also called
type-0 grammars or unrestricted grammars) characterize the same class
of languages, namely, the recursively enumerable or type-0 languages.

There has been work on formalizing Turing machines in proof assistants
\cite{mizarTM, matitaTM, isabelleTM, CarneiroPRF, KunzeTuring,
Cook_Levin-AFP}.
General grammars are an interesting alternative because
they are easier to define than Turing machines, and some proofs about
general grammars are much easier than the proofs of similar properties
of Turing machines.

We therefore chose general grammars as
the basis for our Lean 3 \cite{Lean} library of results about
recursively enumerable or type-0 languages. \pagebreak[2]
The definition of grammars \cite{Aho} consists of several layers of concepts
(Section~\ref{sec:definitions}):
\begin{itemize}
\item the type of symbols is the disjoint union of terminals and nonterminals;
\item rewrite rules are pairs of the form $u \rightarrow v$, where $u$ and
  $v$ are strings over symbols and
  $u$~contains at least one nonterminal;
\item a grammar is a tuple consisting of a type of terminals, a type of
  nonterminals, an initial symbol $S$, and a set of rewrite rules;
\item application of a rewrite rule $u \rightarrow v$ to a string $\alpha u
  \beta$ is written $\alpha u \beta \Rightarrow \alpha v \beta$;
\item the derivation relation $\Rightarrow^*$ is the reflexive transitive
  closure of the $\Rightarrow$ relation;
\item a grammar derives a word $w$ if $S \Rightarrow^* w$;
\item the language generated by a grammar is the set of words derived by it;
\item a language is type 0 if there exists a grammar that generates it.
\end{itemize}

\noindent We formalized four closure properties of type-0 languages.

The first
such property we present is closure of type-0 languages under
union (Section~\ref{sec:closure-under-union}).
We followed the standard construction for context-free
grammars, which incidentally works for general grammars as well.

The second closure property we formalized is closure under reversal
(Section~\ref{sec:closure-under-reversal}). This was straightforward.

The third closure property we formalized is closure under concatenation
(Section~\ref{sec:closure-under-concatenation}). The main difficulty was to
avoid matching strings on the boundary of the concatenation. This issue does not
arise with context-free grammars because only single symbols are matched
and these are tidily located on either side of the boundary.

The fourth and last closure property we formalized is closure under the Kleene star
(Section~\ref{sec:closure-under-kleene-star}). This was the most difficult part
of our work. Because the literature mostly focuses on Turing machine arguments,
we needed to invent our own construction. We first developed a
detailed proof sketch and then formalized it. The sketch is
included in this paper.

One closure property we did not formalize is closure under intersection. The
reason is that we are not aware of any elegant construction based on grammars
only.
Recall that type-0 languages are not closed under complement, as witnessed by
the halting problem \cite{Ullman}.

Our development is freely available online.\footnote{\label{ourl}
\url{https://github.com/madvorak/grammars/tree/publish}}
It consists of about 12\,500 lines of spaciously formatted Lean code.
It uses the Lean 3 mathematical library \texttt{mathlib}
\cite{mathlib-mathlib-2020}.\footnote{\label{mathlib-url}
\url{https://github.com/leanprover-community/mathlib/tree/7ed4f2cec2}}
We also benefited from the metaprogramming framework
\cite{Lean3metaprogramming}, which allowed us to easily develop small-scale
automation that helped make some proofs less verbose.

Although Lean is based on dependent type theory \cite{LuoTT}, our code uses
only nondependent types for data. We still found dependent type theory useful
for bound-checked indexing of lists using the function
\href{https://github.com/leanprover-community/lean/blob/154ac72f4ff674bc4486ac611f926a3d6b999f9f/library/init/data/list/basic.lean#L90}
{\texttt{list.nth\_le}}
(which takes a list, an index, and a proof that the index is within bounds as arguments).
We did not attempt to make our development constructive.

\section{Definitions}
\label{sec:definitions}

\subsection{Grammars}

As outlined in the introduction, the definition of grammars consists of
several layers of declarations.

\lstset{mathescape}

Symbols are essentially defined as a sum type of terminals \texttt{T} and nonterminals \texttt{N}.
However, we want to refer to terminals and nonterminals by name (using
\href{https://github.com/madvorak/grammars/blob/99d9f3e6e2fca51b4ab146cd332ff881ff937a29/src/classes/general/basics/definition.lean#L10}
{\texttt{symbol.terminal}} and
\href{https://github.com/madvorak/grammars/blob/99d9f3e6e2fca51b4ab146cd332ff881ff937a29/src/classes/general/basics/definition.lean#L11}
{\texttt{symbol.nonterminal}}
instead of
\href{https://github.com/leanprover-community/lean/blob/154ac72f4ff674bc4486ac611f926a3d6b999f9f/library/init/core.lean#L205}{\texttt{sum.inl}} and
\href{https://github.com/leanprover-community/lean/blob/154ac72f4ff674bc4486ac611f926a3d6b999f9f/library/init/core.lean#L206}{\texttt{sum.inr}}), so
we define symbols as an inductive type:
\begin{lstlisting}
inductive symbol (T : Type) (N : Type)
| terminal    : T $\rightarrow$ symbol
| nonterminal : N $\rightarrow$ symbol
\end{lstlisting} \pagebreak[2]
We do not require \texttt{T} and \texttt{N} to be finite. As a result, we do not need to copy the typeclass instances \texttt{[fintype T]} and \texttt{[fintype N]} alongside our type parameters (which would appear in almost every lemma
statement). Instead, later we work in terms of a list of rewrite rules, which is finite by
definition and from which we could infer that only a finite set of terminals and
a finite set of nonterminals can occur.

The left-hand side $u$ of a rewrite rule $u \rightarrow v$ consists of three
parts (an arbitrary string $\alpha$, a nonterminal $A$, and another arbitrary
string $\beta$, such that $u = \alpha A \beta$):
\begin{lstlisting}
structure grule (T : Type) (N : Type) :=
(input_L : list (symbol T N))
(input_N : N)
(input_R : list (symbol T N))
(output_string : list (symbol T N))
\end{lstlisting}
An advantage of this representation is that we do not need to carry the
proposition ``the left-hand side contains a nonterminal'' around.
A disadvantage is that we subsequently need to concatenate more terms.

A definition of a general grammar follows. Notice that only the type argument
\texttt{T} is part of its type:
\begin{lstlisting}
structure grammar (T : Type) :=
(nt : Type)
(initial : nt)
(rules : list (grule T nt))
\end{lstlisting}
Later we can use the dot notation to access individual fields.
For example, if \texttt{g} is a term of the type \texttt{grammar T}, we can write \texttt{g.nt} to access the type of its nonterminals. By writing \texttt{(g.rules.nth\_le 0 \_).output\_string} we
obtain the right-hand side of the first rewrite rule in~\texttt{g}. The underscore, when not inferred automatically, must be replaced by a term of the type \texttt{0 < g.rules.length}, which is a proof that the list \texttt{g.rules} is not empty.

The next line adds an implicit type argument \texttt{T} to all declarations that come after:
\begin{lstlisting}
variables {T : Type}
\end{lstlisting}

The following definition captures the application $\Rightarrow$ of a rewrite rule:
\begin{lstlisting}
def grammar_transforms (g : grammar T)
    (w$_1$ w$_2$ : list (symbol T g.nt)) :
  Prop :=
$\exists$ r : grule T g.nt,
  r $\in$ g.rules $\ \ \ \wedge$
  $\exists$ u v : list (symbol T g.nt),
    w$_1$ = u ++ r.input_L ++ [symbol.nonterminal r.input_N]
          $\;\,$++ r.input_R ++ v $\ \ \ \wedge$
    w$_2$ = u ++ r.output_string ++ v
\end{lstlisting}
The operator
\href{https://github.com/leanprover-community/lean/blob/154ac72f4ff674bc4486ac611f926a3d6b999f9f/library/init/data/list/basic.lean#L36}
{\texttt{++}}
concatenates two lists. We can view
\href{https://github.com/madvorak/grammars/blob/99d9f3e6e2fca51b4ab146cd332ff881ff937a29/src/classes/general/basics/definition.lean#L30}
{\texttt{grammar\_transforms}}
as a function that takes a grammar \texttt{g} over the terminal type \texttt{T} and outputs a binary relation over strings of the type that \texttt{g} works internally with.

The part \texttt{r.input\_L ++ [symbol.nonterminal r.input\_N] ++ r.input\_R} represents
the left-hand side of the rewrite rule \texttt{r}. Note that the terms
\texttt{r.input\_L} and \texttt{r.input\_N} cannot be concatenated
directly, since they have different types. The
term \texttt{r.input\_N} must first be wrapped in
\href{https://github.com/madvorak/grammars/blob/99d9f3e6e2fca51b4ab146cd332ff881ff937a29/src/classes/general/basics/definition.lean#L11}
{\texttt{symbol.nonterminal}}
to go from the type \texttt{g.nt} to the type \hbox{\texttt{symbol T g.nt}} and then surrounded by \texttt{[\,]} to become a (singleton) list.

The derivation relation $\Rightarrow^*$ is defined from
\href{https://github.com/madvorak/grammars/blob/99d9f3e6e2fca51b4ab146cd332ff881ff937a29/src/classes/general/basics/definition.lean#L30}
{\texttt{grammar\_transforms}} using the reflexive transitive closure:
\begin{lstlisting}
def grammar_derives (g : grammar T) :
  list (symbol T g.nt) $\rightarrow$ list (symbol T g.nt) $\rightarrow$ Prop :=
relation.refl_trans_gen (grammar_transforms g)
\end{lstlisting}
Consequently, proofs about derivations will use structural induction.

The predicate ``to be a word generated by the grammar \texttt{g}'' is defined as
the special case of the relation \texttt{grammar\_derives g} where the left-hand
side is fixed to be the singleton list made of the initial symbol of \texttt{g} and
the right-hand side is required to consist of terminal symbols only:
\begin{lstlisting}
def grammar_generates (g : grammar T) (w : list T) : Prop :=
grammar_derives g [symbol.nonterminal g.initial]
  (list.map symbol.terminal w)
\end{lstlisting}

\subsection{Languages}

In our entire project, we work with the following definition of languages provided
by \texttt{mathlib} in the
\href{https://github.com/leanprover-community/mathlib/blob/7ed4f2cec258c1a1f86cd9a45adf7dc335f42ee1/src/computability/language.lean#L24}
{\texttt{computability}} package:
\begin{lstlisting}
def language ($\alpha$ : Type*) := set (list $\alpha$)
\end{lstlisting}
The type argument $\alpha$ is instantiated by our terminal type \texttt{T} in all places
where we work with languages.
We do not mind restricting \texttt{T} to be \texttt{Type} since we are not interested in languages
over types from \texttt{Type 1} and higher universes.

The language of the grammar \texttt{g} is defined as the set of all \texttt{w} that satisfy the predicate \texttt{grammar\_generates g w} declared above:
\begin{lstlisting}
def grammar_language (g : grammar T) : language T :=
set_of (grammar_generates g)
\end{lstlisting}
Note that the type parameter \texttt{T} is preserved, but \texttt{g.nt} does not matter in the description of what words are generated. It corresponds to our intuition that the type of terminals is a part of the interface, but the type of nonterminals is an implementation matter.

This is the first time that our custom types meet the standard \texttt{mathlib} type
\href{https://github.com/leanprover-community/mathlib/blob/7ed4f2cec258c1a1f86cd9a45adf7dc335f42ee1/src/computability/language.lean#L24}{\texttt{language}},
which is already connected to many useful types, such as the type of regular expressions.

Finally, we define the class of type-0 languages:
\begin{lstlisting}
def is_T0 (L : language T) : Prop :=
$\exists$ g : grammar T, grammar_language g = L
\end{lstlisting}
All top-level theorems about type-0 languages are expressed in terms of the
\href{https://github.com/madvorak/grammars/blob/99d9f3e6e2fca51b4ab146cd332ff881ff937a29/src/classes/general/basics/definition.lean#L50}
{\texttt{is\_T0}} predicate.

Note that the type system distinguishes between a list of terminals and a list of symbols
that happen to be terminals.
Languages are defined as sets of the former, whereas derivations
in the grammar work with the latter. \pagebreak[2]

In a similar way, we define
\href{https://github.com/madvorak/grammars/blob/99d9f3e6e2fca51b4ab146cd332ff881ff937a29/src/classes/context_free/basics/definition.lean#L6}
{\texttt{CF\_grammar}},
\href{https://github.com/madvorak/grammars/blob/99d9f3e6e2fca51b4ab146cd332ff881ff937a29/src/classes/context_free/basics/definition.lean#L15}
{\texttt{CF\_transforms}},
\href{https://github.com/madvorak/grammars/blob/99d9f3e6e2fca51b4ab146cd332ff881ff937a29/src/classes/context_free/basics/definition.lean#L21}
{\texttt{CF\_derives}},
\href{https://github.com/madvorak/grammars/blob/99d9f3e6e2fca51b4ab146cd332ff881ff937a29/src/classes/context_free/basics/definition.lean#L25}
{\texttt{CF\_generates}},
\href{https://github.com/madvorak/grammars/blob/99d9f3e6e2fca51b4ab146cd332ff881ff937a29/src/classes/context_free/basics/definition.lean#L29}
{\texttt{CF\_language}}, and the
\href{https://github.com/madvorak/grammars/blob/99d9f3e6e2fca51b4ab146cd332ff881ff937a29/src/classes/context_free/basics/definition.lean#L33}
{\texttt{is\_CF}}
predicate for the formal definition of context-free languages.

The theorem
\href{https://github.com/madvorak/grammars/blob/99d9f3e6e2fca51b4ab146cd332ff881ff937a29/src/classes/context_free/basics/inclusion.lean#L74}
{\texttt{CF\_subclass\_T0}}
connects the context-free languages to the type-0 languages.
Type-0 languages remain the main focus of our work. \pagebreak[2]

\subsection{Operations}

The operations under which we prove closure are defined below.

Union is defined in \texttt{mathlib} as follows:
\begin{lstlisting}
protected def set.union (s$_1$ s$_2$ : set $\alpha$) : set $\alpha$ :=
{a | a $\in$ s$_1$ $\vee$ a $\in$ s$_2$}
\end{lstlisting}
The following declaration in \texttt{mathlib} states that the union of languages is denoted by writing the \texttt{+} operator between two terms of the
\href{https://github.com/leanprover-community/mathlib/blob/7ed4f2cec258c1a1f86cd9a45adf7dc335f42ee1/src/computability/language.lean#L24}{\texttt{language}} type:
\begin{lstlisting}
instance : language.has_add (language $\alpha$) := $\langle$set.union$\rangle$
\end{lstlisting}

We define the reversal of a language as follows:
\begin{lstlisting}
def reverse_lang (L : language T) : language T :=
$\lambda$ w : list T, w.reverse $\in$ L
\end{lstlisting}
We do not declare any syntactic sugar for reversal.

Concatenation is defined using the following general \texttt{mathlib} definition:
\begin{lstlisting}
def set.image2 (f : $\alpha$ $\rightarrow$ $\beta$ $\rightarrow$ $\gamma$) (s : set $\alpha$) (t : set $\beta$) : set $\gamma$ :=
{c | $\exists$ a b, a $\in$ s $\wedge$ b $\in$ t $\wedge$ f a b = c}
\end{lstlisting}
The next \texttt{mathlib} declaration states that concatenation of languages is denoted by writing the \texttt{*} operator between two terms of the
\href{https://github.com/leanprover-community/mathlib/blob/7ed4f2cec258c1a1f86cd9a45adf7dc335f42ee1/src/computability/language.lean#L24}{\texttt{language}} type:
\begin{lstlisting}
instance : language.has_mul (language $\alpha$) := $\langle$set.image2 (++)$\rangle$
\end{lstlisting}

The Kleene star of a language is defined in \texttt{mathlib} as follows:
\begin{lstlisting}
def language.star (l : language $\alpha$) : language $\alpha$ :=
{x | $\exists$ S : list (list $\alpha$), x = S.join $\wedge$ $\forall$ y $\in$ S, y $\in$ l}
\end{lstlisting}
We do not declare any syntactic sugar for the Kleene star.

\section{Closure under Union}
\label{sec:closure-under-union}

In this section, we prove the following theorem:
\begin{lstlisting}
theorem T0_of_T0_u_T0 (L$_1$ : language T) (L$_2$ : language T) :
  is_T0 L$_1$ $\wedge$ is_T0 L$_2\;\;\,\rightarrow\;\;\;$is_T0 (L$_1$ + L$_2$)
\end{lstlisting}

The proof of closure of type-0 languages under union consists of three main ingredients:
\begin{enumerate}
\item [\,(1)\!] a construction of a new grammar \texttt{g} from any two given grammars \texttt{g$_1$} and \texttt{g$_2$};
\item [\,(2)\!] a proof that any word generated by \texttt{g$_1$} or \texttt{g$_2$} can also be generated by \texttt{g};
\item [\,(3)\!] a proof that any word generated by \texttt{g} can be equally generated by \texttt{g$_1$} or \texttt{g$_2$}.
\end{enumerate}
Proofs of the other closure properties are organized analogously.
We describe the proof of closure under union in more detail; it allows us to outline
the main ideas of proving closure properties formally in a simple setting.
Since (3) is usually much more difficult than (2),
we refer to (2) as the ``easy direction'' and to (3) as the ``hard direction''. \pagebreak[2]

The proof of the closure of type-0 languages under union follows the standard
construction, which usually states only (1) explicitly, and leaves (2) and (3) to the reader.
We begin (1) by defining a new type of nonterminals. The nonterminals of \texttt{g} consist of
\begin{itemize}
\item the nonterminals of \texttt{g$_1$} including a mark indicating their origin;
\item the nonterminals of \texttt{g$_2$} including a mark indicating their origin;
\item one new distinguished nonterminal.
\end{itemize}
The Lean type \texttt{option (g$_1$.nt $\oplus$ g$_2$.nt)} encodes this disjoint union.
If \texttt{m} is a nonterminal of type \texttt{g$_1$.nt}, its corresponding nonterminal of type \texttt{g.nt} is \texttt{some (sum.inl~m)}.
If \texttt{n} is a nonterminal of type \texttt{g$_2$.nt}, its corresponding nonterminal of type \texttt{g.nt} is \texttt{some (sum.inr~n)}.
The new distinguished nonterminal is called \texttt{none} and becomes the initial symbol of \texttt{g}.
The rewrite rules of \texttt{g} consist of
\begin{itemize}
\item the rewrite rules of \texttt{g$_1$} with all nonterminals mapped to the larger nonterminal type;
\item the rewrite rules of \texttt{g$_2$} with all nonterminals mapped to the larger nonterminal type;
\item two additional rules that rewrite the initial symbol of \texttt{g} to the initial symbol of \texttt{g$_1$} or \texttt{g$_2$}.
\end{itemize}

To reduce the amount of repeated code in the proof, we developed lemmas that allow us to ``lift'' a grammar with a certain type of nonterminals to a grammar with a larger type of nonterminals while preserving what the grammar derives. Under certain conditions, we can also ``sink'' the larger grammar to the original grammar and preserve its derivations.

These lemmas operate on a structure called
\href{https://github.com/madvorak/grammars/blob/99d9f3e6e2fca51b4ab146cd332ff881ff937a29/src/classes/general/basics/lifting.lean#L35}
{\texttt{lifted\_grammar}}
that consists of the following fields:
\begin{itemize}
\item a smaller grammar \texttt{g$_0$} that represents either \texttt{g$_1$} or \texttt{g$_2$} in case of the proof for union;
\item a larger grammar \texttt{g} with the same type of terminals;
\item a function \texttt{lift\_nt} from \texttt{g$_0$.nt} to \texttt{g.nt};
\item a partial function \texttt{sink\_nt} from \texttt{g.nt} to \texttt{g$_0$.nt};
\item a proposition \texttt{lift\_inj} that guarantees that \texttt{lift\_nt} is injective;
\item a proposition \texttt{sink\_inj} that guarantees that \texttt{sink\_nt} is injective on inputs for which \texttt{g$_0$} has a corresponding nonterminal;
\item a proposition \texttt{lift\_nt\_sink} that guarantees that \texttt{sink\_nt} is essentially an inverse of \texttt{lift\_nt};
\item a proposition \texttt{corresponding\_rules} that guarantees that \texttt{g} has a rewrite rule for each rewrite rule \texttt{g$_0$} has (with different type but the same behavior);
\item a proposition \texttt{preimage\_of\_rules} that guarantees that \texttt{g$_0$} has a rewrite rule
for each rewrite rule of \texttt{g} whose nonterminal has a preimage on the \texttt{g$_0$} side.
\end{itemize}

Thanks to this structure, we can abstract from the specifics of how the larger grammar is constructed in concrete proofs and care only about the properties that are required to follow analogous derivations.

To illustrate how we work with this abstraction, we review the proof of the following lemma:
\begin{lstlisting}
private lemma lift_tran {lg : lifted_grammar T}
    {w$_1$ w$_2$ : list (symbol T lg.g$_0$.nt)}
    (hyp : grammar_transforms lg.g$_0$ w$_1$ w$_2$) :
  grammar_transforms lg.g
    (lift_string lg.lift_nt w$_1$)
    (lift_string lg.lift_nt w$_2$)
\end{lstlisting}
We need to show that if \texttt{g$_0$} has a rewrite rule that transforms \texttt{w$_1$} to \texttt{w$_2$}, then \texttt{g} has a rewrite rule that transforms \texttt{lift\_string lg.lift\_nt w$_1$} to \texttt{lift\_string lg.lift\_nt w$_2$}.

We start by deconstructing \texttt{hyp} according to the
\href{https://github.com/madvorak/grammars/blob/99d9f3e6e2fca51b4ab146cd332ff881ff937a29/src/classes/general/basics/definition.lean#L30}
{\texttt{grammar\_transforms}} definition.
To go from \texttt{g$_0$} to \texttt{g},
we first ``lift'' the rewrite rule (i.e., translate its nonterminals in all fields)
that \texttt{g$_0$} used.
We call
\href{https://github.com/madvorak/grammars/blob/99d9f3e6e2fca51b4ab146cd332ff881ff937a29/src/classes/general/basics/lifting.lean#L44}
{\texttt{corresponding\_rules}}
to show that \texttt{g} has such a rule.
Then we use the function
\href{https://github.com/madvorak/grammars/blob/99d9f3e6e2fca51b4ab146cd332ff881ff937a29/src/classes/general/basics/lifting.lean#L17}
{\texttt{lift\_string}}
to lift \texttt{u} and \texttt{v}, which are the parts of the string \texttt{w$_1$} that were not matched by the rule.
We are then left with the proof obligation
\begin{lstlisting}
lift_string lg.lift_nt w$_1$ =
lift_string lg.lift_nt u ++ (lift_rule lg.lift_nt r).input_L
  ++ [symbol.nonterminal (lift_rule lg.lift_nt r).input_N]
  ++ (lift_rule lg.lift_nt r).input_R ++ lift_string lg.lift_nt v
\end{lstlisting}
where
\begin{lstlisting}
w$_1$ =
u ++ r.input_L ++ [symbol.nonterminal r.input_N] ++ r.input_R ++ v
\end{lstlisting}
and with the proof obligation
\begin{lstlisting}
lift_string lg.lift_nt w$_2$ =
lift_string lg.lift_nt u ++ (lift_rule lg.lift_nt r).output_string
  ++ lift_string lg.lift_nt v
\end{lstlisting}
where
\begin{lstlisting}
w$_2$ = u ++ r.output_string ++ v
\end{lstlisting}
These two obligations originate from the two identities in the definition
\href{https://github.com/madvorak/grammars/blob/99d9f3e6e2fca51b4ab146cd332ff881ff937a29/src/classes/general/basics/definition.lean#L30}
{\texttt{grammar\_transforms}}
from Section~\ref{sec:definitions}.
Essentially, we discharge them using the distributivity of
\href{https://github.com/madvorak/grammars/blob/99d9f3e6e2fca51b4ab146cd332ff881ff937a29/src/classes/general/basics/lifting.lean#L17}
{\texttt{lift\_string}}
over the \texttt{++} operation.

The abstraction provided by
\href{https://github.com/madvorak/grammars/blob/99d9f3e6e2fca51b4ab146cd332ff881ff937a29/src/classes/general/basics/lifting.lean#L35}
{\texttt{lifted\_grammar}}
takes care of the vast majority of our proof of the closure of type-0 languages under union. It remains to separately analyze what was the first step of the derivation that \texttt{g} did in the hard direction. We need to exclude all rules that are inherited from \texttt{g$_1$} and \texttt{g$_2$} and perform a case analysis on the two special rules.

The two additional rules of \texttt{g} are context-free. Therefore, if \texttt{g$_1$} and \texttt{g$_2$} have context-free rules only, then all rules of \texttt{g} are context-free as well.
As a consequence, our result about type-0 languages can easily be reused to prove the closure of context-free languages under union:
\begin{lstlisting}
theorem CF_of_CF_u_CF (L$_1$ : language T) (L$_2$ : language T) :
  is_CF L$_1$ $\wedge$ is_CF L$_2\;\;\,\rightarrow\;\;\;$is_CF (L$_1$ + L$_2$)
\end{lstlisting}
Not much Lean code needs to be duplicated
to obtain the result about context-free grammars. We need to write the construction of \texttt{g} again and the main result again. The remaining parts are achieved by reusing lemmas from the proof for general grammars. The main overhead is proving
\begin{lstlisting}
private lemma union_grammar_eq_union_CF_grammar
    {g$_1$ g$_2$ : CF_grammar T} :
  union_grammar (grammar_of_cfg g$_1$) (grammar_of_cfg g$_2$) =
  grammar_of_cfg (union_CF_grammar g$_1$ g$_2$)
\end{lstlisting}
Even though the statement might look complicated, the proof has only five lines, making it one of the shortest tactic-based proofs in our project.

\section{Closure under Reversal}
\label{sec:closure-under-reversal}

In this section, we prove the following theorem:
\begin{lstlisting}
theorem T0_of_reverse_T0 (L : language T) :
  is_T0 L$\;\;\,\rightarrow\;\;\,$is_T0 (reverse_lang L)
\end{lstlisting}

The proof is very easy. Simply speaking, everything gets reversed.
We start with the rewrite rules:
\begin{lstlisting}
private def reversal_grule {N : Type} (r : grule T N) : grule T N :=
grule.mk r.input_R.reverse r.input_N r.input_L.reverse
  r.output_string.reverse
\end{lstlisting}
The constructor
\href{https://github.com/madvorak/grammars/blob/99d9f3e6e2fca51b4ab146cd332ff881ff937a29/src/classes/general/basics/definition.lean#L14}
{\texttt{grule.mk}}
takes arguments in the same order as they are written in the definition:
\begin{itemize}
\item its \texttt{input\_L} is instantiated by \texttt{r.input\_R.reverse};
\item its \texttt{input\_N} is instantiated by \texttt{r.input\_N};
\item its \texttt{input\_R} is instantiated by \texttt{r.input\_L.reverse};
\item its \texttt{output\_string} is instantiated by \texttt{r.output\_string.reverse}.
\end{itemize}

The new grammar is constructed as follows:
\begin{lstlisting}
private def reversal_grammar (g : grammar T) : grammar T :=
grammar.mk g.nt g.initial (list.map reversal_grule g.rules)
\end{lstlisting}

The rest is essentially a repeated application of lemma
\href{https://github.com/madvorak/grammars/blob/99d9f3e6e2fca51b4ab146cd332ff881ff937a29/src/utilities/list_utils.lean#L40}
{\texttt{list.reverse\_append\_append}}, which
is just a repeated application of lemma
\href{https://github.com/leanprover-community/mathlib/blob/7ed4f2cec258c1a1f86cd9a45adf7dc335f42ee1/src/data/list/basic.lean#L610}
{\texttt{list.reverse\_append}}, which
states that reversing two concatenated lists is equivalent to reversing both parts and concatenating them in the opposite order,
and lemma
\href{https://github.com/leanprover-community/mathlib/blob/7ed4f2cec258c1a1f86cd9a45adf7dc335f42ee1/src/data/list/basic.lean#L617}
{\texttt{list.reverse\_reverse}}, which
states that
\href{https://github.com/leanprover-community/lean/blob/154ac72f4ff674bc4486ac611f926a3d6b999f9f/library/init/data/list/basic.lean#L107}
{\texttt{list.reverse}}
is a dual operation.

\section{Closure under Concatenation}
\label{sec:closure-under-concatenation}

In this section, we prove the following theorem:
\begin{lstlisting}
theorem T0_of_T0_c_T0 (L$_1$ : language T) (L$_2$ : language T) :
  is_T0 L$_1$ $\wedge$ is_T0 L$_2\;\;\,\rightarrow\;\;\;$is_T0 (L$_1$ * L$_2$)
\end{lstlisting}
Because the proof is highly technical, we only outline the main idea here.

We first review the classical construction for context-free grammars.
Let $ L_1 \subseteq T^\ast $ be a language generated
by a grammar $ G_1 = ( N_1, T, P_1, S_1 ) $.
Let $ L_2 \subseteq T^\ast $ be a language generated
by a grammar $ G_2 = ( N_2, T, P_2, S_2 ) $.
Without loss of generality, the sets $N_1$ and $N_2$ are disjoint.
We create a new initial symbol $S$ that appears only in the rule $ S \rightarrow S_1 S_2 $.
The new grammar is $ ( N_1 \mskip-2mu\cup\mskip-1mu N_2 \mskip-2mu\cup\mskip-1mu \{ S \},\:
T,\: P_1 \mskip-2mu\cup\mskip-1mu P_2
\mskip-1mu\cup \{ S \rightarrow S_1 S_2 \},\: S\mskip1mu ) $.
This construction works for context-free grammars
because $S_1$ gives rise to a word from $L_1$ and, independently,
$S_2$ gives rise to a word from $L_2$.

For general grammars, the construction above does not work, as
the following counterexample over $ T = \{ a, b \} $ illustrates.
Let the rule sets be
$ P_1 = \{ S_1 \rightarrow S_1 a,\, S_1 \rightarrow \epsilon \} $ and
$ P_2 = \{ S_2 \rightarrow S_2 a,\, S_2 \rightarrow \epsilon,\, a S_2 \rightarrow b \} $.
We obtain $ L_1 = L_2 = \{ a^n \,|\, n \in \mathbb{N}_0 \} $ and
so $ L_1L_2$ is $\{ a^n \,|\, n \in \mathbb{N}_0 \} $ as well.
We can now derive
$ S \Rightarrow S_1 S_2 \Rightarrow S_1 a\mskip1mu S_2 \Rightarrow S_1 b \Rightarrow
b \notin L_1L_2 $ and obtain a contradiction.

We need to avoid matching strings that span across
the boundary of the concatenation. Since the nonterminal sets are disjoint,
the issue arises only with terminals in the left-hand side of rules, which are
not present in context-free grammars.
We provide a solution below.

Let \texttt{g$_1$} and \texttt{g$_2$} generate \texttt{L$_1$} and \texttt{L$_2$} respectively. The nonterminals of our new grammar \texttt{g} consist of
\begin{itemize}
\item the nonterminals of \texttt{g$_1$} including a mark indicating their origin;
\item the nonterminals of \texttt{g$_2$} including a mark indicating their origin;
\item a proxy nonterminal for every terminal from \texttt{T} marked for use by \texttt{g$_1$} only;
\item a proxy nonterminal for every terminal from \texttt{T} marked for use by \texttt{g$_2$} only;
\item one new distinguished nonterminal.
\end{itemize}
The new nonterminal type is encoded by the Lean type
\hbox{\texttt{option$\;$(g$_1$.nt$\,\,\oplus\,\,$g$_2$.nt) $\oplus$ (T$\;\oplus\;$T)}}.
The new distinguished nonterminal becomes the initial symbol of \texttt{g}.

In this way, we ensure that the nonterminals used by \texttt{g} to simulate \texttt{g$_1$} are disjoint from the nonterminals used by \texttt{g} to simulate \texttt{g$_2$}.
There are still real terminals used by both grammars, but \texttt{g} never has these terminals on the left-hand side of a rule, since
the rewrite rules of \texttt{g} consist of
\begin{itemize}
\item the rewrite rules of \texttt{g$_1$} with all nonterminals mapped to the new nonterminal type and all terminals replaced by proxy nonterminals of the first kind;
\item the rewrite rules of \texttt{g$_2$} with all nonterminals mapped to the new nonterminal type and all terminals replaced by proxy nonterminals of the second kind;
\item for every terminal from \texttt{T}, a rule that rewrites the proxy nonterminal of the first kind to the corresponding terminal and a rule that rewrites the proxy nonterminal of the second kind to the corresponding terminal;
\item a special rule that rewrites \texttt{g.initial} to a two-symbol string \texttt{[g$_1$.initial, g$_2$.initial]} wrapped to use the new nonterminal type.
\end{itemize}
Using this construction, we ensure that all rules of \texttt{g} avoid matching strings on the boundary of the concatenation.

Proving that \texttt{g} generates a superset of \texttt{L$_1\!$ * L$_2$} is easy because
we can apply the rewrite rules in the following order, regardless of the languages:
\begin{enumerate}
\item [\,(1)\!] use the special rule to obtain \texttt{[g$_1$.initial, g$_2$.initial]} with the necessary wrapping;
\item [\,(2)\!] generate the string of proxy nonterminals corresponding to the word from \texttt{L$_1$} while \texttt{g$_2$.initial} remains unchanged;
\item [\,(3)\!] replace all proxy nonterminals of the first kind by the corresponding terminals, which results in deriving a word from \texttt{L$_1$} followed by \texttt{g$_2$.initial} as the last symbol;
\item [\,(4)\!] generate the string of proxy nonterminals corresponding to the word from \texttt{L$_2$} while the~first~part of the string remains unchanged;
\item [\,(5)\!] replace all proxy nonterminals of the second kind by the corresponding terminals, which results in deriving a word from \texttt{L$_2$} that follows the word from \texttt{L$_1$} obtained before.
\end{enumerate}
Step (1) is trivial. Steps (2) and (4) are done
by following the derivations by \texttt{g$_1$} and \texttt{g$_2$}, respectively. Steps (3) and (5) are straightforward proofs by induction.

Proving that \texttt{g} generates a subset of \texttt{L$_1$ * L$_2$} is much harder because we do not know in which order the rules of \texttt{g} are applied.
We had to come up with an invariant that relates intermediate strings derived by \texttt{g} to strings that can be derived by \texttt{g$_1$} and \texttt{g$_2$} from their respective initial symbols.

Very roughly speaking, we prove that there are strings \texttt{x} and \texttt{y} for every string \texttt{w} that \texttt{g} can derive, such that the grammar \texttt{g$_1$} can derive \texttt{x}, the grammar \texttt{g$_2$} can derive \texttt{y}, and \hbox{\texttt{x ++ y}}
corresponds to \texttt{w}. As usual, we employ structural induction. \pagebreak[2]
Looking at the last rule \texttt{g} used, we update \texttt{x} or \texttt{y} or neither.
In particular, we want to point out the following declarations
in the formalization:
\begin{itemize}
\item function
\href{https://github.com/madvorak/grammars/blob/99d9f3e6e2fca51b4ab146cd332ff881ff937a29/src/classes/general/closure_properties/concatenation.lean#L98}
{\texttt{nst}} provides the new symbol type which \texttt{g}
operates with;
\item functions
\href{https://github.com/madvorak/grammars/blob/99d9f3e6e2fca51b4ab146cd332ff881ff937a29/src/classes/general/closure_properties/concatenation.lean#L106}
{\texttt{wrap\_symbol$_1$}} and
\href{https://github.com/madvorak/grammars/blob/99d9f3e6e2fca51b4ab146cd332ff881ff937a29/src/classes/general/closure_properties/concatenation.lean#L110}
{\texttt{wrap\_symbol$_2$}}
convert symbols for use by \texttt{g};
\item relation
\href{https://github.com/madvorak/grammars/blob/99d9f3e6e2fca51b4ab146cd332ff881ff937a29/src/classes/general/closure_properties/concatenation.lean#L595}
{\texttt{corresponding\_strings}}
built on top of relation
\href{https://github.com/madvorak/grammars/blob/99d9f3e6e2fca51b4ab146cd332ff881ff937a29/src/classes/general/closure_properties/concatenation.lean#L545}
{\texttt{corresponding\_symbols}}
is used to define how the
strings \texttt{x} and \texttt{y} are precisely related to \texttt{w} after
each step by \texttt{g};
\item lemma
\href{https://github.com/madvorak/grammars/blob/99d9f3e6e2fca51b4ab146cd332ff881ff937a29/src/classes/general/closure_properties/concatenation.lean#L1107}
{\texttt{induction\_step\_for\_lifted\_rule\_from\_g$_1$}}
characterizes the \texttt{x} update;
\item lemma
\href{https://github.com/madvorak/grammars/blob/99d9f3e6e2fca51b4ab146cd332ff881ff937a29/src/classes/general/closure_properties/concatenation.lean#L1911}
{\texttt{induction\_step\_for\_lifted\_rule\_from\_g$_2$}}
characterizes the \texttt{y} update;
\item lemma
\href{https://github.com/madvorak/grammars/blob/99d9f3e6e2fca51b4ab146cd332ff881ff937a29/src/classes/general/closure_properties/concatenation.lean#L2236}
{\texttt{big\_induction}}
states the invariant for proving the hard direction;
\item lemma
\href{https://github.com/madvorak/grammars/blob/99d9f3e6e2fca51b4ab146cd332ff881ff937a29/src/classes/general/closure_properties/concatenation.lean#L2651}
{\texttt{in\_concatenated\_of\_in\_big}}
puts the proof of the hard direction together.
\end{itemize}

Note that the added rules have only one symbol on the left-hand side. Therefore, if the two original
grammars are context-free, our constructed grammar is also context-free.
We thereby obtain, as a bonus, a proof that context-free languages are closed under concatenation.
It is implemented in a similar fashion to the proof that context-free languages are closed under union.

\section{Closure under Kleene Star}
\label{sec:closure-under-kleene-star}

In this section, we prove the following theorem:
\begin{lstlisting}
theorem T0_of_star_T0 (L : language T) :
  is_T0 L$\;\;\,\rightarrow\;\;\,$is_T0 L.star
\end{lstlisting}

This is usually demonstrated by a hand-waving argument about
a two-tape nondeterministic Turing machine.
The language to be iterated is given by a
single-tape (nondeterministic) Turing machine.
The new machine scans the input on the first tape,
copying it onto the second tape as it progresses,
and nondeterministically chooses where the first word ends.
Next, the original machine is simulated on the second tape.
If the simulated machine accepts the word on the second tape,
the process is repeated with the current position of the first head
instead of returning to the beginning of the input.
Finally, when the first head reaches the end of the input,
the second tape contains a suffix of the first tape.
The original machine is simulated once more on the second tape.
If it accepts, the new machine accepts.

Unfortunately, we did not find any proof based on grammars in the literature.
Therefore, we had to invent our own construction.
In Section~\ref{sec:star-sketch}, we present the construction and the idea
underlying its correctness using traditional mathematical notation. In
Section~\ref{sec:star-formalization}, we comment on its formalization.

\subsection{Proof Sketch}
\label{sec:star-sketch}

Let $L \subseteq T^\ast$ be a language generated by the grammar $G = (N, T, P, S)$.
We construct a grammar $G_\ast = (N_\ast, T, P_\ast, Z)$
to generate the language $L^\ast$. The new nonterminal set
\[ N_\ast = N \cup \{Z, \#, R\} \]
expands $N$ with three additional
nonterminals:\ a new starting symbol ($Z$), a delimiter ($\#$),
and a marker for final rewriting ($R$).
The new set of rules is
\[ P_\ast = P \cup \{ Z \rightarrow ZS\#,\, Z \rightarrow R\#,\, R\# \rightarrow R,\, R\# \rightarrow \epsilon \} \cup \{ Rt \rightarrow tR ~|~ t \in T \} \]
Intuitively, $\#$ builds compartments that isolate the words from
the language $L$,
and then $R$ acts as a cleaner that traverses the string from
beginning to end and removes the compartment delimiters $\#$,
thereby ensuring that only terminals are present to the left of $R$.

To see how $G_\ast$ works, consider the following grammar over $ T = \{ a, b \} $. Let $ N = \{ S \} $ and $ P = \{ S \rightarrow a S b,\, S \rightarrow \epsilon \} $.
The set of rules becomes
\[ P_\ast = \{ S \rightarrow aSb,\, S \rightarrow \epsilon,\, Z \rightarrow ZS\#,\, Z \rightarrow R\#,\, R\# \rightarrow R,\, R\# \rightarrow \epsilon,\, Ra \rightarrow aR,\, Rb \rightarrow bR \} \]
The following is an example of $G_\ast$ derivation:
\begin{multline*}
Z \Rightarrow ZS\# \Rightarrow ZS\#S\# \Rightarrow ZaSb\#S\# \Rightarrow
ZaaSbb\#S\# \Rightarrow
ZS\#aaSbb\#S\# \Rightarrow \\
ZaSb\#aaSbb\#S\# \Rightarrow
ZaSb\#aaaSbbb\#S\# \Rightarrow
ZaSb\#aaabbb\#S\# \Rightarrow \\
R\#aSb\#aaabbb\#S\# \Rightarrow
RaSb\#aaabbb\#S\# \Rightarrow
aRSb\#aaabbb\#S\# \Rightarrow \\
aRb\#aaabbb\#S\# \Rightarrow
abR\#aaabbb\#S\# \Rightarrow
abRaaabbb\#S\# \Rightarrow \\
abaRaabbb\#S\# \Rightarrow
abaaRabbb\#S\# \Rightarrow
abaaaRbbb\#S\# \Rightarrow \\
abaaabRbb\#S\# \Rightarrow
abaaabRbb\#aSb\# \Rightarrow
abaaabbRb\#aSb\# \Rightarrow \\
abaaabbRb\#ab\# \Rightarrow
abaaabbbR\#ab\# \Rightarrow
abaaabbbRab\# \Rightarrow \\
abaaabbbaRb\# \Rightarrow
abaaabbbabR\# \Rightarrow
abaaabbbab\hfill
\end{multline*}

\begin{lemma} \label{lemma-easy}
Let $w_1, w_2, \dots, w_n \in L$. Then $G_\ast$ can derive $Z w_1 \# w_2 \# \dots w_n \#$.
\end{lemma}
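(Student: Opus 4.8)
The plan is to induct on $n$: open one compartment at a time with the rule $Z \rightarrow ZS\#$, and fill each compartment with a derivation of the original grammar $G$.

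For $n = 0$ there is nothing to prove, since $G_\ast$ derives $Z$ by reflexivity of $\Rightarrow^*$.

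For the inductive step, assume the claim for $n - 1$ and let $w_1, \dots, w_n \in L$. Applying the induction hypothesis to $w_2, \dots, w_n$ gives
\[ Z \;\Rightarrow^*\; Z\, w_2\#\, w_3\# \cdots w_n\#. \]
One application of $Z \rightarrow ZS\#$ to the leading $Z$ then produces $Z\, S\#\, w_2\# \cdots w_n\#$. Since $w_1 \in L$, the grammar $G$ derives $S \Rightarrow^* w_1$; because every rule of $G$ is, after wrapping its nonterminals into $N_\ast$, also a rule of $G_\ast$, this is a derivation of $G_\ast$ as well, and inserting it into the context with $Z$ on the left and $\#\, w_2\# \cdots w_n\#$ on the right yields $Z\, S\#\, w_2\# \cdots w_n\# \Rightarrow^* Z\, w_1\#\, w_2\# \cdots w_n\#$. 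Chaining the three derivations proves the claim for $n$.

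Two routine ingredients underlie the argument: that a derivation of $G$ transports to one of $G_\ast$ --- this is exactly what the \texttt{lifted\_grammar} lifting lemmas of Section~\ref{sec:closure-under-union} give, since $N \subseteq N_\ast$ and $P \subseteq P_\ast$ up to the wrapping of nonterminals --- and that $\Rightarrow^*$ is preserved when a string is placed inside a fixed left and right context, which follows by induction on the length of the derivation because a single rewrite step already allows arbitrary context. It is worth noting why no rule of $G$ can straddle a compartment boundary: the delimiter $\#$ (like $Z$ and $R$) lies outside $N \cup T$, so the left-hand side of a $G$-rule, being a string over $N \cup T$, can only match within a single maximal $(N \cup T)$-block, and in our derivation each such block is either the freshly introduced $S$ or a word that the induction hypothesis has already reduced to terminals. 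The only real friction is bookkeeping: keeping the coercions between $N$ and $N_\ast$ straight when moving $G$'s derivations into $G_\ast$.
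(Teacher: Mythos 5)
Your proof is correct and follows essentially the same route as the paper: induction on $n$, one application of $Z \rightarrow ZS\#$, lifting the $G$-derivation $S \Rightarrow^* w_i$ into $G_\ast$, and closure of $\Rightarrow^*$ under left/right context. The only (immaterial) difference is that you peel off $w_1$ and grow the string at the front, whereas the paper appends $w_{n+1}\#$ at the end; your closing remark about rules not straddling compartment boundaries is not actually needed for this direction, since here one only exhibits a derivation rather than constrains all of them.
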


\begin{proof}
By induction on $n$. The base case $Z \Rightarrow^\ast Z$ is trivial.

Now assume $\, Z \Rightarrow^\ast Z w_1 \# w_2 \# \dots w_n \# \,$ and $\: S \Rightarrow^\ast w_{n+1} \,$.
We start with the rule $Z \rightarrow ZS\#$. We observe $ZS\# \Rightarrow^\ast Z w_1 \# w_2 \# \dots w_n \# S\# \Rightarrow^\ast Z w_1 \# w_2 \# \dots w_n \# w_{n+1} \#$.
By~transitivity, we obtain $Z \Rightarrow^\ast Z w_1 \# w_2 \# \dots w_n \# w_{n+1} \#$.
\end{proof}
From now on, let $\inde m$ denote the set of $m$ natural numbers $\{ 1, 2, \dots, m \}$.

\begin{lemma} \label{lemma-hard}
If $\alpha \in (T \cup N)^\ast$ can be derived by $G_\ast$, then one of these conditions holds:
\begin{enumerate}
	\item $\exists\mskip1mu x_1, x_2, \dots, x_m \in (T \cup N)^\ast \left( \mskip1mu\forall\mskip1mu i \in\! \inde m \left( S \Rightarrow^\ast\! x_i \right) \aand \alpha = Z x_1\# x_2\# \dots x_m \# \right) ${\rm;}
	\item $\exists\mskip1mu x_1, x_2, \dots, x_m \in (T \cup N)^\ast \left( \mskip1mu\forall\mskip1mu i \in\! \inde m \left( S \Rightarrow^\ast\! x_i \right) \aand \alpha = R\# x_1\# x_2\# \dots x_m \# \right) ${\rm;}
	\item $\exists\mskip1mu w_1, w_2, \dots, w_n \in L \;( \exists \beta \in T^\ast \;( \exists \gamma, x_1, x_2, \dots, x_m \in (T \cup N)^\ast $ \\
	$\hbox{}\qquad \left( S \Rightarrow^\ast\! \beta \gamma \aand \forall\mskip1mu i \in\! \inde m \left( S \Rightarrow^\ast\! x_i \right) \aand \alpha = w_1 w_2 \dots w_n \,\beta R \gamma \# x_1\# x_2\# \dots x_m \# \right) ) ) ${\rm;}
	\item $\alpha \in L^\ast${\rm;}
	\item $\exists\mskip1mu \sigma \in (T \cup N)^\ast \left( \alpha = \sigma R \right) ${\rm;}
	\item $\exists\mskip1mu \omega \in (T \cup N \cup \{\#\})^\ast \left( \alpha = \omega\# \right) ${\rm.}
\end{enumerate}
\end{lemma}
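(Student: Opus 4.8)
The plan is to prove the statement by induction on the derivation $Z \Rightarrow^\ast \alpha$ in $G_\ast$, showing that the disjunction of the six conditions holds for $Z$ and is preserved by every rewrite step. The base case is immediate: $\alpha = Z$ satisfies condition~1 with $m = 0$ (the empty list of $x_i$'s). For the inductive step, we assume $Z \Rightarrow^\ast \alpha \Rightarrow \alpha'$, that $\alpha$ satisfies one of the six conditions, and we must show that $\alpha'$ does too. The only property of $L$ we will invoke is that every terminal string derivable from $S$ belongs to $L$.

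Two structural observations carry most of the weight and should be established first. First, every left-hand side of a rule of $P_\ast$ contains a symbol outside $T$; hence any string consisting of terminals only --- in particular anything in $L^\ast$ --- is irreducible, which disposes of condition~4 vacuously. Second, the left-hand side of every rule inherited from $P$ consists only of symbols of $G$ (no $Z$, $R$, or $\#$), while each of the other rules has $Z$ or $R$ on its left-hand side. Since each of the six shapes contains at most one $Z$ and at most one $R$, and the $\#$'s split the remainder into blocks of $G$-symbols, this pins down where a redex can sit: a $P$-rule rewrites inside one such block --- some $x_i$, or the $\gamma$ of condition~3 --- so the associated derivation $S \Rightarrow^\ast x_i$ or $S \Rightarrow^\ast \beta\gamma$ survives after appending one step; and a $Z$- or $R$-rule can only act at the single designated occurrence, whose neighbourhood is fixed by the shape.

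The inductive step then reduces to a finite case split on the condition that $\alpha$ satisfies and the rule that fires. Conditions~4, 5, and 6 are essentially closed: condition~4 has no successor; from condition~5 ($\alpha = \sigma R$) the trailing $R$ is never part of a redex, so only a $P$-rule inside $\sigma$ fires and we stay in condition~5; from condition~6 ($\alpha = \omega\#$) only a $P$-rule inside $\omega$ fires, and we stay in condition~6. From condition~1 ($\alpha = Z x_1\# \dots x_m\#$): the rule $Z \to ZS\#$ inserts a fresh compartment $S\#$ just after $Z$ (staying in condition~1, with $S$ playing the role of $x_1$), $Z \to R\#$ moves us to condition~2, and a $P$-rule inside some $x_i$ keeps us in condition~1. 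From condition~2 ($\alpha = R\# x_1\# \dots x_m\#$): a $P$-rule inside some $x_i$ keeps us in condition~2; $R\# \to R$ gives condition~3 with $n = 0$, $\beta = \epsilon$, $\gamma = x_1$ (or condition~5 if $m = 0$); $R\# \to \epsilon$ gives condition~6 (or condition~4, the empty word, if $m = 0$). The core case is condition~3 ($\alpha = w_1 \cdots w_n\,\beta R \gamma \# x_1\# \dots x_m\#$): a $P$-rule inside $\gamma$ rewrites $\gamma$ while keeping $S \Rightarrow^\ast \beta\gamma$; a $P$-rule inside some $x_i$ stays in condition~3; $Rt \to tR$ moves a terminal from the front of $\gamma$ onto the end of $\beta$, preserving $\beta \in T^\ast$; and when $\gamma = \epsilon$, $R\# \to R$ seals $\beta$ as a completed word $w_{n+1} \in L$ and begins consuming $x_1$ as the new $\gamma$ (or reaches condition~5 if $m = 0$), while $R\# \to \epsilon$ seals $\beta$ and drops the delimiter, reaching condition~6 (or condition~4 if $m = 0$).

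The main obstacle I expect is organizational rather than conceptual: condition~3 alone branches on which rule fires and, for the $R\#$-rules, further on whether $\gamma$ is empty and whether $m = 0$, and each branch requires producing the new witnesses (the $w$'s, $\beta$, $\gamma$, and the $x$'s) and re-verifying every side condition --- which, in a formal proof, amounts to a fair amount of rearranging list concatenations and reindexing. The genuinely delicate part is the localization lemmas that state exactly where a rule's redex can occur inside each of the six shapes; once those are in hand, each individual case of the induction is routine.
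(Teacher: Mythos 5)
Your proposal is correct and follows essentially the same route as the paper's own proof: induction on the length of the derivation, the base case $\alpha = Z$ under condition~1 with $m=0$, and a case split on which of the six conditions holds crossed with which rule fires, including the same sub-branches on $m=0$ and $\gamma=\epsilon$ and the same witness updates. Your explicit isolation of the ``localization'' observations (terminal-only strings are irreducible; $P$-rules act only inside $\#$-free blocks; the special rules act only at the unique $Z$ or $R$) is a nice way of naming what the paper handles implicitly in each case, but it is not a different argument.
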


In the example above, case 1 arises when $ \alpha = ZaaSbb\#S\# $. We can check that $m=2$, $x_1 = aaSbb$, $x_2 = S$, and condition 1 holds.

In the example above, case 2 arises when $ \alpha = R\#aSb\#aaabbb\#S\# $. We can check that $m=3$, $x_1 = aSb$, $x_2 = aaabbb$, $x_3 = S$, and condition 2 holds.

In the example above, case 3 arises when $ \alpha = abaaabRbb\#aSb\# $. We can check that $n=1$, $m=1$, $w_1 = ab$, $\beta = aaab$, $\gamma = bb$, $x_1 = aSb$, and condition 3 holds.

Case 4 arises only at the end of a successful computation, which is $\alpha = abaaabbbab$ in the example above.

The remaining two cases do not arise in the example above because they describe an unsuccessful computation (like taking a one-way street ending in a blind alley).

Case 5 arises if the rule $R\# \rightarrow R$ is used in the final position (where $R\# \rightarrow \epsilon$ should be used instead). The nonterminal $R$ in the final position prevents the derivation from terminating.

Case 6 arises if the rule $R\# \rightarrow \epsilon$ is used too early (that is, anywhere but the final $\#$ position). The nonterminal $\#$ in the final position during the absence of $R$ and $Z$ in $\alpha$ prevents the derivation from terminating.

\begin{proof}
By induction on $G_\ast$ derivation steps. The base case $\alpha = Z$ satisfies condition 1 by setting $m = 0$.

Now assume $Z \Rightarrow^\ast \alpha \Rightarrow \alpha'$ and proceed by case analysis on the conditions.\pagebreak[2]
\begin{enumerate}
	\item $\exists\mskip1mu x_1, x_2, \dots, x_m \in (T \cup N)^\ast \left( \mskip1mu\forall\mskip1mu i \in\! \inde m \left( S \Rightarrow^\ast\! x_i \right) \aand \alpha = Z x_1\# x_2\# \dots x_m \# \right) $:
	\begin{itemize}
		\item If $\alpha \Rightarrow \alpha'$ used a rule from $P$, it could be applied only in some $x_i$. Hence $S \Rightarrow^\ast x_i \Rightarrow x_i'$, so the same condition holds after replacing $x_i$ by $x_i'$.
		\item If $\alpha \Rightarrow \alpha'$ used the rule $Z \rightarrow ZS\#$, it was applied at the beginning of $\alpha$. Therefore, we set $m' := m + 1$, we set $x_1' := S$, and we increase all indices by one, that is, $x_2' := x_1,\; x_3' := x_2,\; \dots,\, x_{m'}' := x_m$. The same condition holds.
		\item If $\alpha \Rightarrow \alpha'$ used the rule $Z \rightarrow R\#$, we keep all variables the same and condition~2 holds.
		\item The rules $R\# \rightarrow R$, $R\# \rightarrow \epsilon$, and $Rt \rightarrow tR$ are not applicable (since $\alpha$ does not contain $R$).
	\end{itemize}
	\item $\exists\mskip1mu x_1, x_2, \dots, x_m \in (T \cup N)^\ast \left( \mskip1mu\forall\mskip1mu i \in\! \inde m \left( S \Rightarrow^\ast\! x_i \right) \aand \alpha = R\# x_1\# x_2\# \dots x_m \# \right) $:
	\begin{itemize}
		\item If $\alpha \Rightarrow \alpha'$ used a rule from $P$, it could be applied only in some $x_i$. Hence $S \Rightarrow^\ast x_i \Rightarrow x_i'$, so the same condition holds after replacing $x_i$ by $x_i'$.
		\item The rules $Z \rightarrow ZS\#$ and $Z \rightarrow R\#$ are not applicable (since $\alpha$ does not contain $Z$).
		\item If $\alpha \Rightarrow \alpha'$ used the rule $R\# \rightarrow R$, it was applied at the beginning of $\alpha$. If $m = 0$, condition~5 holds (a dead end). Otherwise, we set $m' := m - 1 \ge 0$ and $\gamma := x_1$, and we decrease all indices by one, that is, $x_1' := x_2,\; x_2' := x_3,\; \dots,\; x_{m'}' := x_m$. Since there is nothing before the nonterminal $R$, we set $n := 0$ and $\beta := \epsilon$. Now, condition~3 holds.
		\item If $\alpha \Rightarrow \alpha'$ used the rule $R\# \rightarrow \epsilon$ then: if $m=0$, we obtain the empty word (which belongs to $L^\ast$, satisfying condition~4); if $m>0$, condition~6 holds (because $\#$ remained at the end of $\alpha'$; at the same time $R$ disappeared, and $Z$ did not appear).
		\item The rule $Rt \rightarrow tR$ is not applicable (the only $R$ in $\alpha$ is immediately followed by $\#$).
	\end{itemize}
	\item $\exists\mskip1mu w_1, w_2, \dots, w_n \in L \;( \exists \beta \in T^\ast \;( \exists \gamma, x_1, x_2, \dots, x_m \in (T \cup N)^\ast $ \\
	$\hbox{}\qquad \left(S \Rightarrow^\ast\! \beta \gamma \aand \forall\mskip1mu i \in\! \inde m \left( S \Rightarrow^\ast\! x_i \right) \aand \alpha = w_1 w_2 \dots w_n \,\beta R \gamma \# x_1\# x_2\# \dots x_m \# \right) ) ) $:
	\begin{itemize}
		\item If $\alpha \Rightarrow \alpha'$ used a rule from $P$, it could be applied in $\gamma$ or in some $x_i$. In the first case, $\gamma \Rightarrow \gamma'$ implies $\beta\gamma \Rightarrow \beta\gamma'$, hence $S \Rightarrow^\ast \beta\gamma \Rightarrow \beta'\gamma'$. In the remaining cases, we observe $S \Rightarrow^\ast x_i \Rightarrow x_i'$ as we did at the beginning of our case analysis. As a result, the same condition still holds.
		\item The rules $Z \rightarrow ZS\#$ and $Z \rightarrow R\#$ are not applicable ($\alpha$ does not contain $Z$).
		\item If $\alpha \Rightarrow \alpha'$ used the rule $R\# \rightarrow R$, then $\gamma$ must have been empty. If $m = 0$, condition~5 holds (a dead end). Otherwise, we set $n' := n + 1$, $w_{n'} := \beta$, $\beta' := \epsilon$, $\gamma' := x_1$, and $m' := m - 1$, and we decrease the indices of $x_i$ by one, that is, $x_1' := x_2,\; x_2' := x_3,\; \dots,\; x_{m'}' := x_m$. Since $w_{n'} = \beta = \beta \gamma \in T^\ast$ and $S \Rightarrow^\ast \beta \gamma$, we have $w_{n'} \in L$. The same condition holds.
		\item If $\alpha \Rightarrow \alpha'$ used the rule $R\# \rightarrow \epsilon$, then $\gamma$ must have been empty. If~$m=0$, we get $\alpha = w_1 w_2 \dots w_n \beta$ and $\beta \in L$; hence condition~4, $\alpha' \in L^\ast$, is satisfied. If $m>0$, condition~6 now holds (because $\#$ remained at the end of $\alpha'$; at the same time $R$ disappeared, and $Z$ did not appear).
		\item If $\alpha \Rightarrow \alpha'$ used a rule of the form $Rt \rightarrow tR$ ($t \in T$), we have $\delta \in (T \cup N)^\ast$ such that $\gamma = t \delta$. We put $\beta' := \beta t$ and $\gamma' := \delta$. Since $\beta \gamma = \beta t \delta = \beta' \gamma'$, the same condition holds.
	\end{itemize}
	\item $\alpha \in L^\ast$:
	\begin{itemize}
		\item No rule is applicable (since $\alpha$ contains only terminals). The step $\alpha \Rightarrow \alpha'$ cannot have happened.
	\end{itemize}
	\item $\exists\mskip1mu \sigma \in (T \cup N)^\ast \left( \alpha = \sigma R \right) $:
	\begin{itemize}
		\item No matter which rule was applied, it happened within $\sigma$. No rule could match the final $R$. The same condition holds for $\alpha' = \sigma' R$.
	\end{itemize}
	\item $\exists\mskip1mu \omega \in (T \cup N \cup \{\#\})^\ast \left( \alpha = \omega\# \right) $:
	\begin{itemize}
		\item If $\alpha \Rightarrow \alpha'$ used a rule from $P$, the same condition still holds because $\#$ is not on the left-hand side of any rule from $P$ and neither $Z$ nor $R$ is on the right-hand side of any rule from $P$.
		\item The rules $Z \rightarrow ZS\#$, $Z \rightarrow R\#$, $R\# \rightarrow R$, $R\# \rightarrow \epsilon$, and $Rt \rightarrow tR$ are not applicable (since $\alpha$ contains neither $Z$ nor $R$).  \qedhere
	\end{itemize}
\end{enumerate}
\end{proof}

\begin{theorem}
The class of type-0 languages is closed under the Kleene star.
\end{theorem}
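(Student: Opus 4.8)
The plan is to prove the set equality $L(G_\ast) = L^\ast$, after which closure is immediate: $G_\ast$ is the witness grammar showing \texttt{is\_T0}~$L^\ast$. Both inclusions reduce almost entirely to the two lemmas already established; the only genuinely new work is one short auxiliary derivation on the ``easy'' side.

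For the inclusion $L^\ast \subseteq L(G_\ast)$ I would take an arbitrary $w = w_1 w_2 \dots w_n$ with each $w_i \in L \subseteq T^\ast$. Lemma~\ref{lemma-easy} gives $Z \Rightarrow^\ast Z w_1\# w_2\# \dots w_n\#$, and one application of the rule $Z \rightarrow R\#$ replaces the leading $Z$ by $R\#$. What remains is a \emph{cleaning lemma}: whenever $t_1, \dots, t_n \in T^\ast$, one has $R\# t_1\# t_2\# \dots t_n\# \Rightarrow^\ast t_1 t_2 \dots t_n$. I would prove this by induction on $n$: the base case $n = 0$ is the single step $R\# \rightarrow \epsilon$; in the inductive step, $R\# \rightarrow R$ deletes the first delimiter, then the rules $Rt \rightarrow tR$ applied once per letter of $t_1$ (a nested induction on $|t_1|$) slide $R$ past $t_1$, producing $t_1\,(R\# t_2\# \dots t_n\#)$, to which the induction hypothesis applies. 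Instantiating $t_i := w_i$ yields $Z \Rightarrow^\ast w$, and since $w \in T^\ast$ this gives $w \in L(G_\ast)$.

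For the inclusion $L(G_\ast) \subseteq L^\ast$ I would take $w \in L(G_\ast)$, so $Z \Rightarrow^\ast w$ with $w \in T^\ast$; in particular $w$ contains no occurrence of $Z$, $R$, or $\#$, none of which lies in $T$. Lemma~\ref{lemma-hard} applied to $\alpha := w$ leaves one of its six cases. Cases 1, 2, 3, 5, and 6 each force $\alpha$ to contain one of $Z$, $R$, $\#$ — visible in the explicit shapes $Z x_1\# \dots$, $R\# x_1\# \dots$, $\dots \beta R \gamma\# \dots$, $\sigma R$, and $\omega\#$ respectively, including their degenerate instances — contradicting $w \in T^\ast$. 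Hence case 4 holds, which says precisely $w \in L^\ast$. Combining the two inclusions gives $L(G_\ast) = L^\ast$, and the theorem follows.

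The substantive difficulty of the whole result is concentrated in Lemma~\ref{lemma-hard}: discovering a six-disjunct invariant that is preserved by every rule of $P_\ast$ yet collapses to ``$\alpha \in L^\ast$'' as soon as $\alpha$ is terminal. That work is already carried out above, so at the level of this theorem the only remaining non-routine point is the nested induction in the cleaning lemma (outer on the number of compartments, inner on the length of each terminal block as $R$ walks across it via $Rt \rightarrow tR$). In the Lean development this step is noticeably more laborious than on paper, because strings are lists and $Z$, $\#$, $R$ are injected into a wrapped nonterminal type, so each informal ``$R$ slides past one symbol'' becomes an explicit list-rewriting lemma; conceptually, however, nothing beyond the sketch above is required.
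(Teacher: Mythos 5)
Your proposal is correct and matches the paper's proof in essentially every respect: both directions rest on Lemmas~\ref{lemma-easy} and~\ref{lemma-hard} exactly as you use them, and your ``cleaning lemma'' is the same $R$-sweeping argument the paper performs (the paper sweeps $R$ to the end to reach $w_1 \dots w_n R\#$ and then erases once via $R\# \rightarrow \epsilon$, whereas you fold that final erasure into the base case of a per-compartment recursion --- an organizational difference only, and your nested-induction structure is in fact what the formalization's \texttt{terminal\_scan\_ind} does). No gaps.
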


\begin{proof}
We need to show that the language of $G_\ast$ equals $L^\ast$.
We prove two inclusions.

For ``$\supseteq L^\ast$'', we use Lemma \ref{lemma-easy}. If $w \in L^\ast$, there exist words $w_1, w_2, \dots, w_n \in L$ such that $w_1 w_2 \dots w_n = w$. We see $Z w_1 \# w_2 \# \dots w_n \# \Rightarrow R\# w_1 \# w_2 \# \dots w_n \#$. Since all words $w_i$ are made of terminals only, by repeated application of $R\# \rightarrow R$ and $Rt \rightarrow tR$ (for all $t \in T$) we get $R\# w_1 \# w_2 \# \dots w_n \# \Rightarrow^\ast w_1  w_2 \dots w_n R \# $. Finally, $w_1  w_2 \dots w_n R \# \Rightarrow w_1 w_2 \dots w_n$ is obtained by the rule $R\# \rightarrow \epsilon$. We conclude that $G_\ast$ generates $w$.

For ``$\subseteq L^\ast$'', we use Lemma \ref{lemma-hard} and observe that if $G_\ast$ generates $\alpha \in T^\ast$, then $\alpha \in L^\ast$ because all the remaining cases require $\alpha$ to contain a nonterminal.
\end{proof}

\subsection{Formalization}
\label{sec:star-formalization}

The formalization closely follows the proof sketch. The main difference between
the two is that where the proof sketch states that an expression belongs to a
set, the formalization specifies a type for a term and sometimes a condition that
further restricts the term's values.

Lemma \ref{lemma-easy} is implemented by lemma
\href{https://github.com/madvorak/grammars/blob/99d9f3e6e2fca51b4ab146cd332ff881ff937a29/src/classes/general/closure_properties/star.lean#L84}
{\texttt{short\_induction}}, which
takes $w$ in reverse order for technical reasons. Its proof uses the
\href{https://github.com/madvorak/grammars/blob/99d9f3e6e2fca51b4ab146cd332ff881ff937a29/src/classes/general/basics/lifting.lean#L35}
{\texttt{lifted\_grammar}}
approach outlined in Section~\ref{sec:closure-under-union}. The part $R\# w_1 \# w_2 \# \dots w_n \# \Rightarrow^\ast w_1  w_2 \dots w_n R \# $ is implemented by lemma
\href{https://github.com/madvorak/grammars/blob/99d9f3e6e2fca51b4ab146cd332ff881ff937a29/src/classes/general/closure_properties/star.lean#L306}
{\texttt{terminal\_scan\_ind}}, which
employs a nested induction to pass $R$ to the right. The final step of the easy direction is performed inside the theorem
\href{https://github.com/madvorak/grammars/blob/99d9f3e6e2fca51b4ab146cd332ff881ff937a29/src/classes/general/closure_properties/star.lean#L3938}
{\texttt{T0\_of\_star\_T0}} itself.

Lemma \ref{lemma-hard} is implemented by lemma
\href{https://github.com/madvorak/grammars/blob/99d9f3e6e2fca51b4ab146cd332ff881ff937a29/src/classes/general/closure_properties/star.lean#L3870}
{\texttt{star\_induction}}, whose
formal proof spans over 3000 lines. The base case is discharged immediately. For the induction step, we developed six lemmas
\href{https://github.com/madvorak/grammars/blob/99d9f3e6e2fca51b4ab146cd332ff881ff937a29/src/classes/general/closure_properties/star.lean#L1171}
{\texttt{star\_case\_1}} to
\href{https://github.com/madvorak/grammars/blob/99d9f3e6e2fca51b4ab146cd332ff881ff937a29/src/classes/general/closure_properties/star.lean#L3647}
{\texttt{star\_case\_6}}
distinguished by which of the six conditions $\alpha$ satisfies. In each of them, except for
\href{https://github.com/madvorak/grammars/blob/99d9f3e6e2fca51b4ab146cd332ff881ff937a29/src/classes/general/closure_properties/star.lean#L3359}
{\texttt{star\_case\_4}}, which
took only four lines to prove, we perform a case analysis on which rule was used for the $\alpha \Rightarrow \alpha'$ transition.

For each case, unless a short
ex-falso-quodlibet proof suffices, we need
to narrow down where in $\alpha$ the rule could be applied. This analysis is challenging for the rules that were inherited from the original grammar. Consider
\href{https://github.com/madvorak/grammars/blob/99d9f3e6e2fca51b4ab146cd332ff881ff937a29/src/classes/general/closure_properties/star.lean#L1067}
{\texttt{case\_1\_match\_rule}},
where the informal argument literally says:
``If $\alpha \Rightarrow \alpha'$ used a rule from $P$, it could be applied only in some $x_i$.''

It turns out that this deduction is so complicated that it was worth creating an auxiliary lemma
\href{https://github.com/madvorak/grammars/blob/99d9f3e6e2fca51b4ab146cd332ff881ff937a29/src/classes/general/closure_properties/star.lean#L728}
{\texttt{cases\_1\_and\_2\_and\_3a\_match\_aux}}
to detach the head $Z$ from $\alpha$ and perform the analysis on $x_1\# x_2\# \dots x_m \#$ in order to make the proof easier.
As a useful side effect, the auxiliary lemma becomes applicable to similar situations in
\href{https://github.com/madvorak/grammars/blob/99d9f3e6e2fca51b4ab146cd332ff881ff937a29/src/classes/general/closure_properties/star.lean#L1638}
{\texttt{star\_case\_2}} and
\href{https://github.com/madvorak/grammars/blob/99d9f3e6e2fca51b4ab146cd332ff881ff937a29/src/classes/general/closure_properties/star.lean#L2925}
{\texttt{star\_case\_3}}, as shown in
\href{https://github.com/madvorak/grammars/blob/99d9f3e6e2fca51b4ab146cd332ff881ff937a29/src/classes/general/closure_properties/star.lean#L1510}
{\texttt{case\_2\_match\_rule}} and
\href{https://github.com/madvorak/grammars/blob/99d9f3e6e2fca51b4ab146cd332ff881ff937a29/src/classes/general/closure_properties/star.lean#L2267}
{\texttt{case\_3\_match\_rule}}, where
more adaptations are needed but the same core argument is used.

From a formal point of view, we abused the symbol `${\dots}$' in the proof sketch. Replacing it by a formal statement usually leads to
\href{https://github.com/leanprover-community/lean/blob/154ac72f4ff674bc4486ac611f926a3d6b999f9f/library/init/data/list/basic.lean#L128}
{\texttt{list.join}} of
\href{https://github.com/leanprover-community/lean/blob/154ac72f4ff674bc4486ac611f926a3d6b999f9f/library/init/data/list/basic.lean#L110}
{\texttt{list.map}}
of something. For example, compare case 1 in the proof sketch
\[\exists\mskip1mu x_1, x_2, \dots, x_m \in (T \cup N)^\ast \left( \mskip1mu\forall\mskip1mu i \in\! \inde m \left( S \Rightarrow^\ast\! x_i \right) \aand \alpha = Z x_1\# x_2\# \dots x_m \# \right)\]
to its formal counterpart:
\begin{lstlisting}
$\exists$ x : list (list (symbol T g.nt)),
  ($\forall$ x$_i$ $\in$ x, grammar_derives g [symbol.nonterminal g.initial] x$_i$) $\wedge$
  ($\alpha$ = [Z] ++ list.join
    (list.map (++ [H]) (list.map (list.map wrap_sym) x)))
\end{lstlisting}
The nonterminal \# is represented by the letter \texttt{H} in the code. Notice how easy it is to write the quantification $\exists\mskip1mu x_1, x_2, \dots, x_m \in (T \cup N)^\ast$ in Lean. The part $\forall i \in\! \inde m \left(S \Rightarrow^\ast x_i\right)$ is also elegant. However, the expression $Z x_1\# x_2\# \dots x_m \#$ leads to a fairly complicated Lean term.

Because many lemmas need to work with expressions like the above, it is important to master how to manipulate terms that combine
\href{https://github.com/leanprover-community/lean/blob/154ac72f4ff674bc4486ac611f926a3d6b999f9f/library/init/data/list/basic.lean#L128}
{\texttt{list.join}}
with other functions. For example, the following lemma is useful:
\begin{lstlisting}
lemma append_join_append {s : list $\alpha$} (L : list (list $\alpha$)) :
  s ++ (list.map ($\lambda$ l, l ++ s) L).join =
       (list.map ($\lambda$ l, s ++ l) L).join ++ s
\end{lstlisting}
This lemma allows us to move the parentheses in $ s\mskip1mu (l_1 s) (l_2 s) \dots (l_n s) $ to get $ (s\mskip1mu l_1) (s\mskip1mu l_2) \dots (s\mskip1mu l_n\mskip-1mu)\mskip1mu s $ and vice versa.

Working with expressions such as $Z x_1\# x_2\# \dots x_m \#$ is tedious in
Lean. We see this, however, not as a weakness of Lean but rather as an
indication that the `${\dots}$' notation is highly informal. Mathematical
expressions with `${\dots}$' tend to be ambiguous and require the reader's
cooperation to make sense of them. In the absence of support for `${\dots}$'
in the proof assistant \cite{flexdots},
it is natural that formalizing such expressions leads to verbose code.

In contrast to concatenation,
the above proof cannot be reused to establish the closure of context-free languages
under the Kleene star because
our construction adds rules with two symbols on their left-hand side.
However, there exists an easier construction for context-free languages
that could be formalized separately if desired.

\section{Related Work}
\label{sec:related-work}

To our knowledge, no one has formalized general grammars before. Context-free
grammars were formalized by Carlson et al.\ \cite{mizarCF} using Mizar, by
Minamide \cite{isabelleCF} using Isabelle/HOL, by Barthwal and Norrish
\cite{holCF} using HOL4, by Firsov and Uustalu \cite{agdaCF} using Agda, and by
Ramos \cite{coqCF} using Coq.

Finite automata have often been subjected to verification. In particular,
Thompson and Dillies \cite{mathlib-mathlib-2020} formalized finite automata,
which recognize regular languages, using Lean.
Thomson \cite{mathlib-mathlib-2020} also formalized regular expressions,
which recognize regular languages as well.

There is ample verification work also for other models of computation:
\begin{itemize}
\item Turing machines were formalized using Mizar \cite{mizarTM},
  Matita \cite{matitaTM}, Isabelle/HOL \cite{isabelleTM}, Lean
  \cite{CarneiroPRF}, Coq \cite{KunzeTuring},
  and recently again Isabelle/HOL \cite{Cook_Levin-AFP}. Of
  these, the most impressive development is probably the last one, by Balbach.
  It uses multi-tape Turing machines and culminates with a proof of the
  Cook--Levin theorem, which states that SAT is \textbf{NP}-complete.

\item The $\lambda$-calculus was formalized by Norrish \cite{NorrishRF} using HOL4
  and later by Forster, Kunze, and their colleagues
\cite{ForsterLambdaCoq,KunzeSmallStep,KunzeTimeBounds,ForsterTimeSpace,ForsterTimeInvariance,KunzeCook} using Coq. The latter group of authors proposed an
  untyped call-by-value $\lambda$-calculus as a convenient basis for
  computability and complexity theory because it naturally supports
  compositionality.

\item The partial recursive functions were formalized by
  Norrish \cite{NorrishRF} using HOL4
  and by Carneiro \cite{CarneiroPRF} using Lean.

\item Random access machines were formalized by Coen \cite{LindaCalculus} using 
  Coq.
\end{itemize}

\section{Conclusion}
\label{sec:conclusion}

We defined general grammars in Lean and used them to establish closure properties
of recursively enumerable or type-0 languages. We found that closure under union and reversal were
straightforward to formally prove, but had to invest considerable effort to
prove closure under concatenation and the Kleene star. Despite the tedium of some of
the proofs, we believe that grammars are probably a more convenient formalism
than Turing machines for showing closure properties.
On the other hand, since grammars do not define any of the important complexity
classes (such~as~\textbf{P}), formalization of Turing machines and other
computational models is needed to further develop the formal theory of
computer science.

As future work, results about context-sensitive, context-free,
and regular grammars could be incorporated into our library.
A comprehensive Lean library encompassing the entire
Chomsky hierarchy would be valuable. We already have
some results about context-free grammars, and the \texttt{mathlib} results about
regular languages could be connected to our library. As a more ambitious goal,
we might attempt to prove the equivalence between general grammars and Turing
machines.

\bibliography{bib}

\begin{thebibliography}{10}

\bibitem{Aho}
Alfred~V. Aho and Jeffrey~D. Ullman.
\newblock {\em The {Theory} of {Parsing}, {Translation}, and {Compiling}}.
\newblock Prentice Hall, 1st edition, 1972.

\bibitem{matitaTM}
Andrea Asperti and Wilmer Ricciotti.
\newblock Formalizing {Turing} {Machines}.
\newblock In {\em Logic, {Language}, {Information} and {Computation}}, volume
  7456 of {\em Lecture {Notes} in {Computer} {Science}}, pages 1--25. Springer,
  2012.
\newblock \href {https://doi.org/10.1007/978-3-642-32621-9_1}
  {\path{doi:10.1007/978-3-642-32621-9_1}}.

\bibitem{Cook_Levin-AFP}
Frank~J. Balbach.
\newblock The {C}ook-{L}evin theorem.
\newblock {\em Archive of Formal Proofs}, 2023.
\newblock \url{https://isa-afp.org/entries/Cook_Levin.html}, Formal proof
  development.

\bibitem{holCF}
Aditi Barthwal and Michael Norrish.
\newblock Mechanisation of {PDA} and {Grammar} {Equivalence} for
  {Context}-{Free} {Languages}.
\newblock In Anuj Dawar and Ruy de~Queiroz, editors, {\em WoLLIC 2010}, volume
  6188 of {\em Lecture {Notes} in {Computer} {Science}}, pages 125--135.
  Springer, 2010.
\newblock \href {https://doi.org/10.1007/978-3-642-13824-9_11}
  {\path{doi:10.1007/978-3-642-13824-9_11}}.

\bibitem{mizarCF}
Patricia~L. Carlson, Grzegorz Bancerek, and Im~Pan.
\newblock {Context-Free Grammar\,---\,Part 1}.
\newblock {\em J. Formaliz. Math.}, 1992.
\newblock URL: \url{http://mizar.org/JFM/pdf/lang1.pdf}.

\bibitem{CarneiroPRF}
Mario Carneiro.
\newblock {Formalizing Computability Theory via Partial Recursive Functions}.
\newblock In John Harrison, John O'Leary, and Andrew Tolmach, editors, {\em ITP
  2019}, volume 141 of {\em LIPIcs}, pages 12:1--12:17. Schloss
  Dagstuhl\,---\,Leibniz-Zentrum f{\"{u}}r Informatik, 2019.
\newblock \href {https://doi.org/10.4230/LIPIcs.ITP.2019.12}
  {\path{doi:10.4230/LIPIcs.ITP.2019.12}}.

\bibitem{mizarTM}
Jing-Chao Chen and Yatsuka Nakamura.
\newblock Introduction to {Turing} {Machines}.
\newblock {\em J. Formaliz. Math.}, 9(4), 2001.
\newblock URL: \url{https://fm.mizar.org/2001-9/pdf9-4/turing_1.pdf}.

\bibitem{LindaCalculus}
Claudio~Sacerdoti Coen.
\newblock A {Constructive} {Proof} of the {Soundness} of the {Encoding} of
  {Random} {Access} {Machines} in a {Linda} {Calculus} with {Ordered}
  {Semantics}.
\newblock In Carlo Blundo and Cosimo Laneve, editors, {\em ICTCS 2003}, volume
  6188 of {\em Lecture {Notes} in {Computer} {Science}}, pages 37--57.
  Springer, 2003.
\newblock \href {https://doi.org/10.1007/978-3-540-45208-9_5}
  {\path{doi:10.1007/978-3-540-45208-9_5}}.

\bibitem{Lean}
Leonardo de~Moura, Soonho Kong, Jeremy Avigad, Floris van Doorn, and Jakob von
  Raumer.
\newblock {The {Lean} Theorem Prover (System Description)}.
\newblock In Amy~P. Felty and Aart Middeldorp, editors, {\em {CADE}-25}, volume
  9195 of {\em Lecture Notes in Computer Science}, pages 378--388. Springer,
  2015.
\newblock \href {https://doi.org/10.1007/978-3-319-21401-6_26}
  {\path{doi:10.1007/978-3-319-21401-6_26}}.

\bibitem{Lean3metaprogramming}
Gabriel Ebner, Sebastian Ullrich, Jared Roesch, Jeremy Avigad, and Leonardo
  de~Moura.
\newblock {A Metaprogramming Framework for Formal Verification}.
\newblock {\em Proc. ACM Program. Lang.}, 1(ICFP):1--29, 2017.
\newblock \href {https://doi.org/10.1145/3110278} {\path{doi:10.1145/3110278}}.

\bibitem{agdaCF}
Denis Firsov and Tarmo Uustalu.
\newblock {Certified Normalization of Context-Free Grammars}.
\newblock In {\em CPP 2015}, pages 167--174. ACM, 2015.
\newblock \href {https://doi.org/10.1145/2676724.2693177}
  {\path{doi:10.1145/2676724.2693177}}.

\bibitem{KunzeTimeBounds}
Yannick Forster and Fabian Kunze.
\newblock {A Certifying Extraction with Time Bounds from Coq to Call-By-Value
  Lambda Calculus}.
\newblock In John Harrison, John O'Leary, and Andrew Tolmach, editors, {\em ITP
  2019}, volume 141 of {\em LIPIcs}, pages 17:1--17:19. Schloss
  Dagstuhl\,---\,Leibniz-Zentrum f{\"{u}}r Informatik, 2019.
\newblock \href {https://doi.org/10.4230/LIPIcs.ITP.2019.17}
  {\path{doi:10.4230/LIPIcs.ITP.2019.17}}.

\bibitem{ForsterTimeSpace}
Yannick Forster, Fabian Kunze, and Marc Roth.
\newblock {The Weak Call-by-Value Lambda-Calculus is Reasonable for Both Time
  and Space}.
\newblock {\em Proc. ACM Program. Lang.}, 4(POPL):27:1--27:23, 2019.
\newblock \href {https://doi.org/10.1145/3371095} {\path{doi:10.1145/3371095}}.

\bibitem{ForsterTimeInvariance}
Yannick Forster, Fabian Kunze, Gert Smolka, and Maximilian Wuttke.
\newblock {A Mechanised Proof of the Time Invariance Thesis for the Weak
  Call-By-Value {$\lambda$}-Calculus}.
\newblock In Liron Cohen and Cezary Kaliszyk, editors, {\em ITP 2021}, volume
  193 of {\em LIPIcs}, pages 19:1--19:20. Schloss
  Dagstuhl\,---\,Leibniz-Zentrum f{\"{u}}r Informatik, 2021.
\newblock \href {https://doi.org/10.4230/LIPIcs.ITP.2021.19}
  {\path{doi:10.4230/LIPIcs.ITP.2021.19}}.

\bibitem{KunzeTuring}
Yannick Forster, Fabian Kunze, and Maximilian Wuttke.
\newblock {Verified Programming of {Turing} Machines in {Coq}}.
\newblock In {\em CPP 2020}, pages 114--128. ACM, 2020.
\newblock \href {https://doi.org/10.1145/3372885.3373816}
  {\path{doi:10.1145/3372885.3373816}}.

\bibitem{ForsterLambdaCoq}
Yannick Forster and Gert Smolka.
\newblock {Weak Call-by-Value Lambda Calculus as a Model of Computation in
  Coq}.
\newblock In Mauricio Ayala{-}Rinc{\'{o}}n and C{\'{e}}sar~A. Mu{\~{n}}oz,
  editors, {\em ITP 2017}, volume 10499 of {\em Lecture Notes in Computer
  Science}, pages 189--206. Springer, 2017.
\newblock \href {https://doi.org/10.1007/978-3-319-66107-0_13}
  {\path{doi:10.1007/978-3-319-66107-0_13}}.

\bibitem{KunzeCook}
Lennard Gäher and Fabian Kunze.
\newblock Mechanising {Complexity} {Theory}: {The} {Cook}-{Levin} {Theorem} in
  {Coq}.
\newblock In Liron Cohen and Cezary Kaliszyk, editors, {\em ITP 2021}, volume
  193 of {\em LIPIcs}, pages 20:1--20:18. Schloss
  Dagstuhl\,---\,Leibniz-Zentrum für Informatik, 2021.
\newblock \href {https://doi.org/10.4230/LIPIcs.ITP.2021.20}
  {\path{doi:10.4230/LIPIcs.ITP.2021.20}}.

\bibitem{Ullman}
John~E. Hopcroft, Rajeev Motwani, and Jeffrey~D. Ullman.
\newblock {\em Introduction to Automata Theory, Languages, and Computation}.
\newblock Pearson/Addison Wesley, 3rd edition, 2007.

\bibitem{flexdots}
Fulya Horozal, Florian Rabe, and Michael Kohlhase.
\newblock Flexary {Operators} for {Formalized} {Mathematics}.
\newblock In Stephen~M. Watt, James~H. Davenport, Alan~P. Sexton, Petr Sojka,
  and Josef Urban, editors, {\em Intelligent {Computer} {Mathematics}}, Lecture
  {Notes} in {Computer} {Science}, pages 312--327. Springer, 2014.
\newblock \href {https://doi.org/10.1007/978-3-319-08434-3_23}
  {\path{doi:10.1007/978-3-319-08434-3_23}}.

\bibitem{KunzeSmallStep}
Fabian Kunze, Gert Smolka, and Yannick Forster.
\newblock {Formal Small-Step Verification of a Call-by-Value Lambda Calculus
  Machine}.
\newblock In Sukyoung Ryu, editor, {\em APLAS 2018}, volume 11275 of {\em
  Lecture Notes in Computer Science}, pages 264--283. Springer, 2018.
\newblock \href {https://doi.org/10.1007/978-3-030-02768-1_15}
  {\path{doi:10.1007/978-3-030-02768-1_15}}.

\bibitem{LuoTT}
Zhaohui Luo.
\newblock {\em An {Extended} {Calculus} of {Constructions}}.
\newblock PhD thesis, University of Edinburgh, 1990.
\newblock URL:
  \url{https://era.ed.ac.uk/bitstream/handle/1842/12487/Luo1990.Pdf}.

\bibitem{mathlib-mathlib-2020}
The mathlib community.
\newblock The {L}ean {M}athematical {L}ibrary.
\newblock In Jasmin Blanchette and C\u{a}t\u{a}lin Hri\cb{t}cu, editors, {\em
  CPP 2020}, pages 367--381. {ACM}, 2020.
\newblock \href {https://doi.org/10.1145/3372885.3373824}
  {\path{doi:10.1145/3372885.3373824}}.

\bibitem{isabelleCF}
Yasuhiko Minamide.
\newblock {Verified Decision Procedures on Context-Free Grammars}.
\newblock In Klaus Schneider and Jens Brandt, editors, {\em TPHOLs 2007},
  volume 4732 of {\em Lecture Notes in Computer Science}, pages 173--188.
  Springer, 2007.
\newblock \href {https://doi.org/10.1007/978-3-540-74591-4_14}
  {\path{doi:10.1007/978-3-540-74591-4_14}}.

\bibitem{NorrishRF}
Michael Norrish.
\newblock {Mechanised Computability Theory}.
\newblock In Marko C. J.~D. van Eekelen, Herman Geuvers, Julien Schmaltz, and
  Freek Wiedijk, editors, {\em ITP 2011}, volume 6898 of {\em Lecture Notes in
  Computer Science}, pages 297--311. Springer, 2011.
\newblock \href {https://doi.org/10.1007/978-3-642-22863-6_22}
  {\path{doi:10.1007/978-3-642-22863-6_22}}.

\bibitem{coqCF}
Marcus Vinícius~Midena Ramos.
\newblock Formalization of {Context}-{Free} {Language} {Theory}.
\newblock {\em Bull. Symbol. Log.}, 25(2):214--214, 2019.
\newblock \href {https://doi.org/10.1017/bsl.2019.3}
  {\path{doi:10.1017/bsl.2019.3}}.

\bibitem{isabelleTM}
Jian Xu, Xingyuan Zhang, and Christian Urban.
\newblock Mechanising {Turing} {Machines} and {Computability} {Theory} in
  {Isabelle}/{HOL}.
\newblock In {\em Interactive {Theorem} {Proving}}, volume 7998, pages
  147--162. Springer Berlin Heidelberg, 2013.
\newblock Lecture Notes in Computer Science.
\newblock \href {https://doi.org/10.1007/978-3-642-39634-2_13}
  {\path{doi:10.1007/978-3-642-39634-2_13}}.

\end{thebibliography}

\end{document}